\newtheorem{thm}{Theorem}
\newtheorem{corollary}{Corollary}
\newtheorem{lemma}{Lemma}
\newcommand{\E}{\mathbb{E}}
\newcommand{\Prob}{\mathbb{P}}
\newcommand{\nn}{\nonumber\\}
\newcommand{\lp}{\left(}
\newcommand{\rp}{\right)}
\begin{document}
%
\title{Status Updates in a multi-stream M/G/1/1 preemptive queue}

\author{\IEEEauthorblockN{Elie Najm}
\IEEEauthorblockA{LTHI, EPFL, Lausanne, Switzerland\\
Email: elie.najm@epfl.ch}
\and 
\IEEEauthorblockN{Emre Telatar}
\IEEEauthorblockA{LTHI, EPFL, Lausanne, Switzerland\\
Email: emre.telatar@epfl.ch}
}


%


\maketitle

\begin{abstract}
We consider a source that collects a multiplicity of streams of updates and sends them through a network to a monitor. However, only a single update can be in the system at a time.
Therefore, the transmitter always preempts the packet being served when a new update is generated. We consider Poisson arrivals for
each stream and a common general service time, and refer to this system as the multi-stream M/G/1/1 queue with preemption.
Using the detour flow graph method, we compute a closed form expression for the average age and the average peak age of each stream.
Moreover, we deduce that although all streams are treated equally from a transmission point of view (they all preempt each
other), 
one can still prioritize a stream from an age point of view by simply increasing its generation rate. However, this
will increase the sum of the ages which is minimized when all streams have the same update rate.  
\end{abstract}


%
\IEEEpeerreviewmaketitle

\section{Introduction}
\label{sect:sect_intro}

Previous work on status update, e.g. \cite{KaulYatesGruteser-2012Infocom,2012CISS-KaulYatesGruteser,CostaCodreanuEphremides2016,
KamKompellaEphremides2013ISIT,NajmNasser-ISIT2016,YatesKaul-2012ISIT}, used an Age of Information (AoI) metric in order to
assess the freshness of randomly generated updates sent by one or multiple sources to a monitor through the network. In these
papers, updates are assumed to be generated according to a Poisson process and the main metric used to quantify the \emph{age}
is the time average age (which we will call average age) given by
\begin{equation}
\label{eq:eq_age_definition}
\Delta = \lim_{\tau\to\infty} \frac{1}{\tau}\int_0^\tau \Delta(t)\mathrm{d}t,
\end{equation} 
where $\Delta(t)$ is the instantaneous age at the receiver of the information about the source status. If the last successfully received update was generated at time
$u(t)$ then the \emph{age} of the source status at time $t$ is $\Delta(t)=t-u(t)$. When the system is idle or an update is being transmitted then
the instantaneous age increases linearly with time. Once an update generated at time $t_i$ is
received by the monitor at $t'_i$, $\Delta(t)$ drops to the value $t'_i-t_i$. This results in the sawtooth sample path seen in
Fig.~\ref{fig1}. 

Moreover, in \cite{CostaCodreanuEphremides2014ISIT} the authors introduce another age metric: the \emph{average peak age}
defined as the time average of the maximum value of the instantaneous age $\Delta(t)$ right before the reception by the monitor of a new update. In
Fig.~\ref{fig1} the peak age right before the reception of the $j^{th}$ successfully transmitted update is denoted by $K_j$.
Hence the average peak age is given by
\begin{equation}
	\label{eq:eq_avg_peak_age_def}
	\Delta_{peak} = \lim_{N\to\infty}\frac{1}{N}\sum_{j=1}^N K_j.
\end{equation}
In this paper, we assume that an \lq observer\rq\ (which we will call source), generating updates according to a Poisson process with
rate $\lambda$, observes $M$ streams of data. At each generation instant, the source chooses to \lq observe\rq\ stream $i$
and send its observation (update) of this stream with probability $p_i$, $i=1,\dots,M$. This probability distribution is a
design parameter that one can control.
Moreover, we assume that the system can handle only one update at a time without any buffer to store incoming updates. This means that whenever a
new update is generated and the system is busy, the transmitter preempts the packet being served and starts sending the new
update instead. Since we consider a general service time distribution for the updates, we denote this transmission scheme by
M/G/1/1 preemptive queue. It has been shown that for a single-stream source and  exponential update service time, preemption
ensures the lowest average age \cite{2012CISS-KaulYatesGruteser}. However, the work in \cite{NajmNasser-ISIT2016} suggests that
under the assumption of gamma distributed service time, preemption might not be the best policy. In
\cite{NajmYatesSoljaninMG11}, the authors derive a closed form expression for the average age of a single-stream source and M/G/1/1
preemptive queue.

As a generalization of the result in \cite{NajmYatesSoljaninMG11}, we derive in this paper a closed form expression for the
average age and average peak age per stream of the multi-stream source M/G/1/1 preemptive queue. To that end we use the detour
flow graph method which is also used to find an upper bound on the error probability of a Viterbi decoder
(see \cite{rimoldi_2016}). A special case of this problem was studied in
\cite{YatesKaul-2016arxiv} where the service time is assumed to be exponentially distributed. In this paper the average age of each stream was obtained in closed form using a stochastic hybrid system.
Another related work, \cite{HuangModiano2015ISIT}, gives closed form expressions
for the average peak age of multi-stream source M/G/1 queues as well as M/G/1/1 queues with blocking. In this last model, if a
newly generated update finds the system busy, it is discarded.

In addition, given a fixed total update rate $\lambda$, we show in this work that if we want to decrease the age of a
certain stream $i$ with respect to other streams we need to increase its update rate (by increasing its choice probability
$p_i$) and thus decreasing the update rates of the other streams. Moreover, if we choose the sum of the ages as our performance
metric and we wish to minimize it then we prove that we need to adopt a fair strategy: all streams should be given the same update
rate.

This paper is structured as follows: in Section~\ref{sec:sec_system_model}, we start by defining the model and the different
variables needed in our study. In Section~\ref{sec:sec_age_multi_stream} we derive the closed form expressions of the
average age and average peak age and state the conditions necessary to minimize the sum of the ages.
\section{System Model}
\label{sec:sec_system_model}
In this model a source generates updates according to a Poisson process with rate $\lambda$ and send them through the network.
However, we assume that the updates belong to $M$ different streams, each stream $i$ being chosen independently at generation time 
with probability $p_i$, $\sum_{i=1}^M p_i =1$. This setup is equivalent to having $M$ independent Poisson sources with rates $\lambda_i = \lambda p_i$,
$i=1,\dots,M$, and $\lambda = \lambda_1+\dots+\lambda_M$ (see \cite{ross}). Moreover, we consider an M/G/1/1 queue with
preemption. This means that only one update can be in the system at a time and thus the different streams preempt each others
and even the same stream preempts itself. This setup was analyzed in \cite{YatesKaul-2016arxiv} where the authors considered an
exponential service time. In this paper, we assume a service time $S$ with general distribution. Given that the system is
symmetric from the point of view of each stream, we will focus --- without loss of generality--- on stream $1$ as the main
stream. Hence, unless stated otherwise, all random variables correspond to packets from stream $1$.

Moreover, in this paper we follow the convention where a random variable $U$ with no subscript corresponds to the
steady-state version of $U_j$ which refers to the random variable relative to the $j^{th}$ received packet from stream $1$. To differentiate
between streams we will use superscripts, so $U^{(i)}$ corresponds to the steady-state variable $U$ relative to the $i^{th}$
stream. 

It is important to note that in M/G/1/1 queues with preemption, some updates might be dropped. Hence we call the updates that 
are not dropped, and thus delivered to the receiver, as \lq\lq successfully received updates\rq\rq\ or \lq\lq successful updates\rq\rq. 
We also define: $(i)$ $Y_j= t'_{j+1}-t'_{j}$ to be the interdeparture time between the $j^{th}$ and ${j+1}^{th}$ 
successfully received updates, $(ii)$ $X^{(i)}$ to be the interarrival time between two consecutive generated updates from stream
$i$, $i=1,\dots,M$, (which may or may not be successfully transmitted), so $f_{X^{(i)}}(x) = \lambda_i e^{-\lambda_i x}$, $(iii)$ $S$ to be the
service time random variable for any update (from any stream) with distribution $F_S(t)$, $(iv)$ $T_j$ to be the system time, or the time spent
by the $j^{th}$ successful update in the queue and $(v)$ $N_\tau=\max\left\{n:t'_{n}\leq\tau\right\}$, the number of 
successfully received updates from stream $1$ in the interval $[0,\tau]$. In our model, we assume the service time of the 
updates from the different streams to be independent of the interarrival time between consecutive packets (belonging to the same
stream or not). These concepts are illustrated in Fig.~\ref{fig1}, where only successfully transmitted packets from stream $1$
are shown.

\begin{figure}[!t]
	\centering
	\includegraphics[scale=0.5]{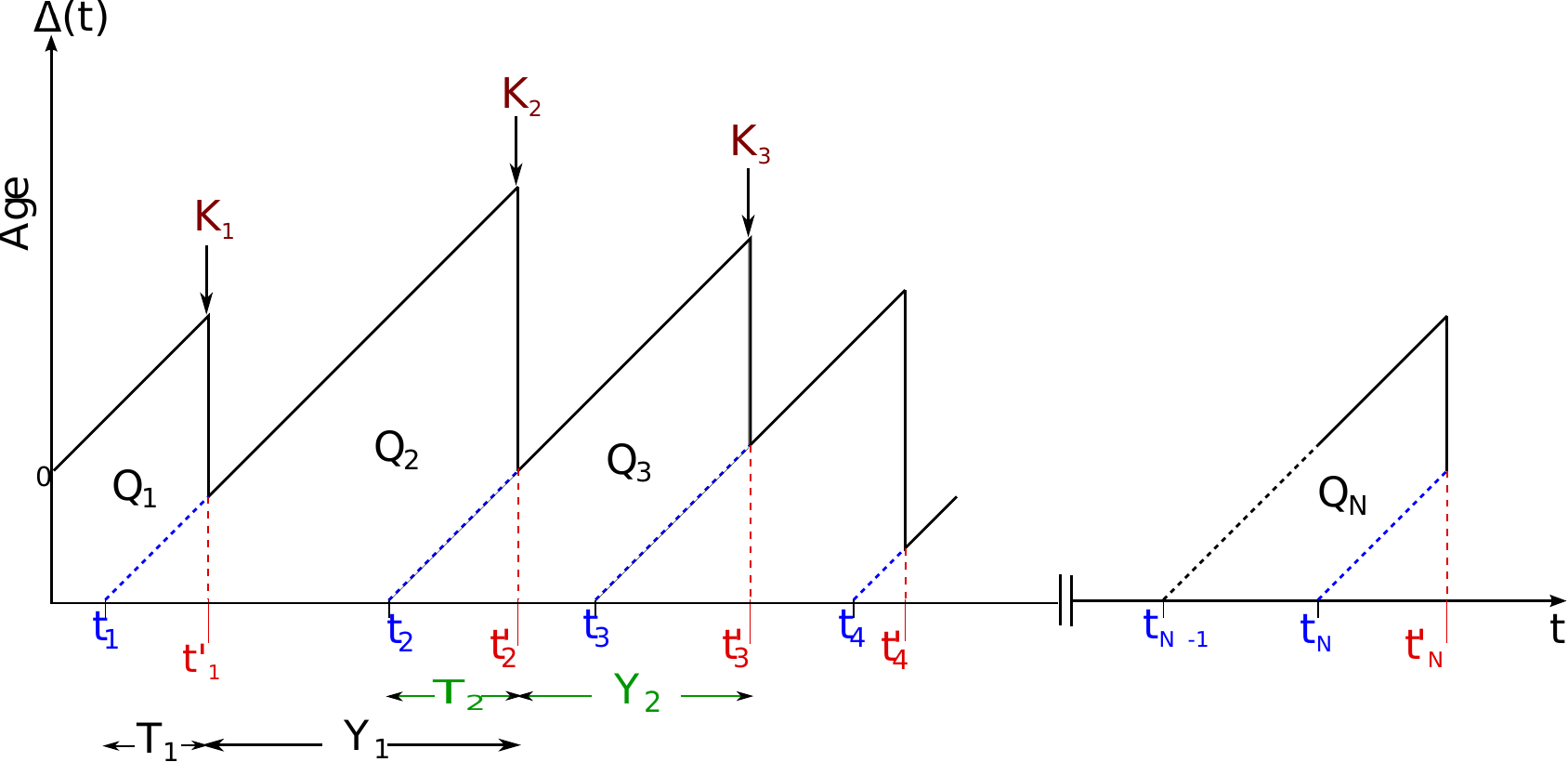}%
	\caption{Variation of the instantaneous age of stream $1$ for M/G/1/1 queue with preemption}
	\label{fig1}
\end{figure}

\section{Age of a multi-stream M/G/1/1 preemptive queue}
\label{sec:sec_age_multi_stream}
We denote by $P_\lambda$, the Laplace transform of the service time distribution evaluated at
$\lambda=\lambda_1+\dots+\lambda_M$, i.e.
$P_\lambda = \E\lp e^{-\lambda S}\rp$.

Before stating the main result of this section we need the following lemmas.
\begin{lemma}
	\label{lemma:lemma_general_results}
	Let $X$, $\Lambda$ and $S$ be three non-negative independent random variables with respective distributions: $f_X(x) = \lambda_1
	e^{-\lambda_1 x}$, $f_\Lambda(x) = (\lambda-\lambda_1)e^{-(\lambda-\lambda_1)x}$  and $f_S(t)$, with
	$\lambda>\lambda_1>0$. Let $A$, $Z$, $B$, $V$ be random
	variables such that $\Prob\lp A>t\rp=\Prob\lp X>t|X<\Lambda\rp$, $\Prob\lp Z>t\rp=\Prob\lp \Lambda>t|X>\Lambda\rp$,
	$\Prob\lp B>t\rp=\Prob\lp X>t|X<\min\lp S,\Lambda\rp\rp$ and $\Prob\lp V>t\rp=\Prob\lp \Lambda>t|\Lambda<\min\lp
	S,X\rp\rp$. Then,
	\begin{description}
		\item[$(i)$] $\E\lp e^{sA}\rp = \E\lp e^{sZ}\rp = \frac{\lambda}{\lambda-s}$,
		\item[$(ii)$] $\E\lp e^{sB}\rp = \E\lp e^{sV}\rp =
			\frac{\lambda\lp1-P_{\lambda-s}\rp}{\lp\lambda-s\rp\lp1-P_\lambda\rp}$,
	\end{description}
	with $P_\lambda$ being the Laplace transform of the random variable $S$ evaluated at $\lambda$.
\end{lemma}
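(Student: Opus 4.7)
The plan is to reduce everything to the following classical fact about competing exponentials: if $X\sim\text{Exp}(\lambda_1)$ and $\Lambda\sim\text{Exp}(\lambda-\lambda_1)$ are independent, then $W := \min(X,\Lambda) \sim \text{Exp}(\lambda)$ and $W$ is independent of the indicator $\mathbf{1}\{X<\Lambda\}$. From this it follows that the conditional distribution of $X$ given $\{X<\Lambda\}$ equals the distribution of $W$, and likewise for $\Lambda$ given $\{\Lambda<X\}$. This immediately gives (i): both $A$ and $Z$ are $\text{Exp}(\lambda)$, so their MGF is $\lambda/(\lambda-s)$ for $s<\lambda$.

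For (ii), I would exploit the additional independence: since $S$ is independent of the pair $(X,\Lambda)$, it is independent of $W$ and of $\mathbf{1}\{X<\Lambda\}$ separately. The event $\{X<\min(S,\Lambda)\}$ is $\{X<\Lambda\}\cap\{W<S\}$, and conditioning on both events leaves the distribution of $X$ (equivalently $W$) the same as conditioning on $\{W<S\}$ alone. Hence $B \stackrel{d}{=} (W\mid W<S)$ with $W\sim\text{Exp}(\lambda)$, and by symmetry the same is true for $V$, which already explains why $\E(e^{sB})=\E(e^{sV})$ without any further computation.

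It then remains to compute the MGF of $(W\mid W<S)$. The density is $f(t) = \lambda e^{-\lambda t}\Pr(S>t)/\Pr(W<S)$. I would evaluate the normalizing constant by Fubini:
\begin{equation*}
\Pr(W<S) = \int_0^\infty \lambda e^{-\lambda t}\Pr(S>t)\,dt = \int_0^\infty f_S(u)(1-e^{-\lambda u})\,du = 1 - P_\lambda.
\end{equation*}
Applying the very same identity with $\lambda$ replaced by $\lambda-s$ to the numerator of the MGF gives
\begin{equation*}
\E(e^{sB}) = \frac{\lambda}{1-P_\lambda}\int_0^\infty e^{-(\lambda-s)t}\Pr(S>t)\,dt = \frac{\lambda(1-P_{\lambda-s})}{(\lambda-s)(1-P_\lambda)},
\end{equation*}
which is the claimed formula.

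There is no real obstacle here; the only thing to get right is the independence argument that lets us replace the conditioning event $\{X<\min(S,\Lambda)\}$ by the simpler $\{W<S\}$. Everything after that is a direct Laplace-transform computation using the identity $\int_0^\infty e^{-\lambda t}\Pr(S>t)\,dt = (1-P_\lambda)/\lambda$ applied twice.
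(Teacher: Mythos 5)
Your proof is correct, and it takes a somewhat different route from the paper's. The paper proves the claim only for $B$ by a direct computation: it writes the conditional density $f_B(t)=\lambda_1 e^{-\lambda t}\bar{F}_S(t)/\Prob\lp X\leq\min(S,\Lambda)\rp$, evaluates $\Prob\lp X\leq\min(S,\Lambda)\rp=\frac{\lambda_1}{\lambda}(1-P_\lambda)$ by integration by parts, integrates once more to get the MGF, and asserts that the same technique handles $A$, $Z$, $V$. You instead use the structural fact that for independent exponentials $W=\min(X,\Lambda)$ is $\mathrm{Exp}(\lambda)$ and independent of $\mathbf{1}\{X<\Lambda\}$, which identifies $A\stackrel{d}{=}Z\stackrel{d}{=}W$ and $B\stackrel{d}{=}V\stackrel{d}{=}(W\mid W<S)$ in one stroke; the equalities $\E\lp e^{sA}\rp=\E\lp e^{sZ}\rp$ and $\E\lp e^{sB}\rp=\E\lp e^{sV}\rp$ then come for free rather than from repeating the same calculation four times. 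After that reduction, your density for $(W\mid W<S)$ is exactly the paper's $f_B(t)=\lambda e^{-\lambda t}\bar{F}_S(t)/(1-P_\lambda)$, and your Fubini evaluation of $\int_0^\infty e^{-\lambda t}\Prob(S>t)\,\mathrm{d}t=(1-P_\lambda)/\lambda$ is the paper's integration by parts in different clothing, so the analytic core coincides. The one step that does need to be spelled out, and which you correctly identify as the crux, is that the pair $(W,\mathbf{1}\{W<S\})$ is independent of $\mathbf{1}\{X<\Lambda\}$; this holds because $S$ is independent of $(X,\Lambda)$ while $W$ is independent of $\mathbf{1}\{X<\Lambda\}$, so conditioning on $\{X<\Lambda\}\cap\{W<S\}$ indeed reduces to conditioning on $\{W<S\}$ alone. (As in the paper, the resulting formulas are valid for $s<\lambda$.) What your approach buys is economy and an explanation of the pairwise equalities; what the paper's buys is a fully elementary, self-contained computation that does not rely on the independence-of-minimum-and-minimizer property.
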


\begin{proof}
	We will only prove the result for the variable $B$ since we can apply the same technique for the others.
	Denote by $\bar{F_S}(t)$ the complementary CDF of $S$. Then, 
	\begin{align*}
		\Prob\lp \min(S,\Lambda)\geq t\rp &=\Prob\lp S\geq t,
	\Lambda\geq t\rp\\&= \Prob\lp S\geq t\rp\Prob\lp \Lambda\geq t\rp\\ 
						  &= \bar{F_S}(t)e^{-(\lambda-\lambda_1)t}.
	\end{align*}
	{\small\begin{align*}
		f_B(t) &= \lim_{\epsilon\to 0}\frac{\Prob\lp B\in [t,t+\epsilon] \rp}{\epsilon}\\
		       &= \lim_{\epsilon\to 0}\frac{\Prob\lp X\in [t,t+\epsilon]|X\leq\min(S,\Lambda)\rp}{\epsilon}\\
	      	       &= \lim_{\epsilon\to 0}\frac{\Prob\lp X\in [t,t+\epsilon]\rp \Prob\lp X\leq\min(S,\Lambda)| X\in
		       [t,t+\epsilon]\rp}{\epsilon\Prob\lp X\leq\min(S,\Lambda)\rp}\\
		       &= \frac{\lambda_1 e^{-\lambda_1 t}\Prob\lp \min(S,\Lambda)\geq t\rp}{\Prob\lp X\leq\min(S,\Lambda)\rp} =
		       \frac{\lambda_1e^{-\lambda t}\bar{F_S}(t)}{\Prob\lp X\leq\min(S,\Lambda)\rp},
	       \end{align*}}
	\begin{align*}
		\Prob\lp X\leq\min(S,\Lambda)\rp &= \int_0^\infty \Prob\lp \min(S,\Lambda)\geq t|X=t\rp\lambda_1 e^{-\lambda_1
		t}\mathrm{d}t\\
					      	 &= \int_0^\infty \lambda_1 e^{-\lambda t}\bar{F_S}(t)\mathrm{d}t
						 = \frac{\lambda_1}{\lambda}\lp1-P_\lambda\rp,
	\end{align*}
	where the last equality is obtained using integration by parts. Thus $f_B(t) = \frac{\lambda e^{-\lambda
	t}\bar{F_S}(t)}{1-P_{\lambda}}$. Using again integration by parts we find that $
		\E\lp e^{sB}\rp=\int_0^\infty
	f_B(t)e^{st}\mathrm{d}t = \frac{\lambda\lp1-P_{\lambda-s}\rp}{\lp\lambda-s\rp\lp1-P_\lambda\rp}$.
\end{proof}
\begin{lemma}
	\label{lemma:lemma_mg11_sys_time}
	For the M/G/1/1 queue with preemption described above, the moment generating function of the system time $T^{(i)}$
	corresponding to a stream $i$ is given by
	\begin{equation}
		\label{eqn:eqn_mg11_T}
		\phi_{T^{(i)}}(s) = \frac{P_{\lambda-s}}{P_\lambda}.
	\end{equation}
        Note that the right hand side of \eqref{eqn:eqn_mg11_T} does not depend on the chosen stream.
\end{lemma}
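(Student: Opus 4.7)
The plan is to exploit the memoryless property of the superimposed Poisson arrival process together with the preemption rule: once a stream-$i$ packet begins service, the only thing that matters for its eventual success is whether its service time $S$ finishes before the next Poisson arrival from \emph{any} stream (since in this model every stream, including $i$ itself, preempts everyone else). So the first step I would take is to argue that a packet is successful if and only if $S < \Lambda'$, where $\Lambda' \sim \text{Exp}(\lambda)$ with $\lambda = \lambda_1+\dots+\lambda_M$ is the time from start of service to the next arrival, independent of $S$. Because only the aggregate rate $\lambda$ enters, the distribution will not depend on the stream index $i$, which is exactly the statement at the end of the lemma.

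Next, I would identify $T^{(i)}$ with the random variable $S \mid S<\Lambda'$, since for a successful packet the system time equals its (uninterrupted) service time. The conditional density is then
\begin{equation*}
f_{T^{(i)}}(t) = \frac{f_S(t)\,\Prob(\Lambda' > t)}{\Prob(S<\Lambda')} = \frac{f_S(t)\,e^{-\lambda t}}{P_\lambda},
\end{equation*}
where the denominator follows from $\Prob(S<\Lambda') = \E\bigl(e^{-\lambda S}\bigr) = P_\lambda$ by conditioning on $S$ and using the tail of the exponential. This is essentially the same calculation carried out for the random variable $B$ in Lemma~\ref{lemma:lemma_general_results}, so I would just cite that computation rather than redoing it.

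Finally, I would compute the moment generating function directly from this density:
\begin{equation*}
\phi_{T^{(i)}}(s) = \int_0^\infty e^{st}\,\frac{f_S(t)\,e^{-\lambda t}}{P_\lambda}\,\mathrm{d}t = \frac{1}{P_\lambda}\int_0^\infty e^{-(\lambda-s)t}f_S(t)\,\mathrm{d}t = \frac{P_{\lambda-s}}{P_\lambda},
\end{equation*}
which is the desired expression. There is no real obstacle here; the only conceptual point worth emphasizing is why preemption by a packet from the \emph{same} stream is treated on the same footing as preemption by another stream, which is exactly what makes the answer depend on $\lambda$ rather than on $\lambda_i$ and hence be stream-independent.
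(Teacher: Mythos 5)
Your proposal is correct and follows essentially the same route as the paper: condition on the successful packet's service finishing before the next arrival of the superposed Poisson process (your $\Lambda'$ is the paper's $L=\min\lp X^{(1)},\dots,X^{(M)}\rp$, exponential with rate $\lambda$), derive the conditional density $f_S(t)e^{-\lambda t}/P_\lambda$, and integrate to get $P_{\lambda-s}/P_\lambda$. The stream-independence observation via the aggregate rate $\lambda$ matches the paper's remark as well.
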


\begin{proof}
	Without loss of generality we will prove Lemma~\ref{lemma:lemma_mg11_sys_time} for stream $1$. The system time $T_j$ of
	the $j^{th}$ successfully received packet corresponds to the service time of the $j^{th}$ received packet given that
	service was completed before any new arrival (since any new packet from any stream will preempt the current update
	being served). So, in steady-state, $\Prob\lp T>t\rp = \Prob\lp S>t|S<\min\lp X^{(1)},\dots, X^{(M)}\rp\rp$. Hence, for
	$L = \min\lp X^{(1)},\dots, X^{(M)}\rp$,
	\begin{align*}
		f_T(t) &= \lim_{\epsilon\to 0}\frac{\Prob\lp T\in [t,t+\epsilon] \rp}{\epsilon}\\
		       &= \lim_{\epsilon\to 0}\frac{\Prob\lp S\in [t,t+\epsilon]|S<L\rp}{\epsilon}\\
	      	       &= \lim_{\epsilon\to 0}\frac{\Prob\lp S\in [t,t+\epsilon]\rp \Prob\lp S<L| S\in
		       [t,t+\epsilon]\rp}{\epsilon\Prob\lp S<L\rp}\\
		       &= \frac{f_S(t)\Prob\lp L>t\rp}{\Prob\lp S<L\rp} = \frac{f_S(t)e^{-\lambda t}}{\Prob\lp S<L\rp},
	\end{align*}
	where the last equality is due to the fact that $L$ is exponentially distributed with rate $\lambda$. Thus,
	\begin{align*}
		\phi_T(s) &= \E\lp e^{sT} \rp = \int_0^\infty \frac{f_S(t)}{\Prob\lp
		S<L\rp}e^{-(\lambda-s)t}\mathrm{d}t = \frac{P_{\lambda-s}}{\Prob\lp S<L\rp}.
	\end{align*}
	Finally,
	\begin{align}
		\label{eq:eq_P_l}
		\Prob\lp S<L\rp &= \int_0^\infty f_S(t)\Prob\lp L>t\rp\mathrm{d}t = \int_0^\infty f_S(t)e^{-\lambda
		t}\mathrm{d}t\nn
				&= P_\lambda.
	\end{align}
\end{proof}

\begin{lemma}
	\label{lemma:lemma_mg11_Y}
	The moment generating function of the interdeparture time of the $i^{th}$ stream, $Y^{(i)}$, is 
	\begin{equation}
		\label{eqn:eqn_mg11_Y}
		\phi_{Y^{(i)}}(s) = \frac{\lambda_i P_{\lambda-s}}{\lambda_i P_{\lambda-s}-s}. 
	\end{equation}
\end{lemma}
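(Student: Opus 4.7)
The plan is to apply the detour flow graph method to the evolution of the queue starting right after a stream-$i$ departure. I would use the states $0$ (system empty), $i$ (a stream-$i$ packet in service), $\bar{i}$ (a non-stream-$i$ packet in service), and an absorbing state $\ast$ (next stream-$i$ departure). Writing $T_0$, $T_i$, $T_{\bar{i}}$ for the moment generating functions of the time to reach $\ast$ from each non-absorbing state, and observing that the system is empty immediately after a stream-$i$ departure, I obtain $\phi_{Y^{(i)}}(s)=T_0$.

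Each branch of the graph is labeled with a joint expectation $\E\lp e^{sD}\mathbf{1}_{\mathcal{E}}\rp$, where $D$ is the duration of the sojourn in the source state and $\mathcal{E}$ is the event triggering the transition; such labels compose correctly along paths by the independence of successive sojourns. The outgoing edges from $0$ consist of an exponential wait $W\sim\mathrm{Exp}\lp\lambda\rp$ followed by a Poisson split of the first arrival, contributing $\lambda_i/(\lambda-s)$ to $i$ and $(\lambda-\lambda_i)/(\lambda-s)$ to $\bar{i}$. From $i$ or $\bar{i}$, either the current service completes before any arrival, with label $\E\lp e^{sS}\mathbf{1}_{\{S<L\}}\rp=P_{\lambda-s}$, or a fresh arrival preempts, with time label $\E\lp e^{sL}\mathbf{1}_{\{L<S\}}\rp=\lambda(1-P_{\lambda-s})/(\lambda-s)$ further split into $\lambda_i/\lambda$ or $(\lambda-\lambda_i)/\lambda$ according to the stream of the preempting arrival. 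The two joint MGFs above follow from the same integration-by-parts step used in the proof of Lemma~\ref{lemma:lemma_general_results}. Completion from $i$ leads to $\ast$, completion from $\bar{i}$ returns to $0$, and preemption from either returns to $i$ or $\bar{i}$.

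The resulting three linear equations admit a convenient simplification: the aggregate preemption cluster out of either $i$ or $\bar{i}$ equals $((1-P_{\lambda-s})/(\lambda-s))\lp\lambda_i T_i+(\lambda-\lambda_i)T_{\bar{i}}\rp$, which in turn equals $(1-P_{\lambda-s})T_0$ by the equation for $T_0$. This reduces the system to $T_i=P_{\lambda-s}+(1-P_{\lambda-s})T_0$ together with $T_{\bar{i}}=T_0$, leaving a single linear equation in $T_0$ that rearranges to the claimed $\lambda_i P_{\lambda-s}/\lp\lambda_i P_{\lambda-s}-s\rp$. The step I expect to be most delicate is the bookkeeping of the branch labels: each must be the \emph{joint} MGF of the transition time with the transition event, not the conditional MGF times the event probability, and one must invoke Poisson splitting to conclude that the stream of a preempting arrival is independent of the preemption time $L\mid L<S$. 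Once those are handled, the remaining algebra is elementary.
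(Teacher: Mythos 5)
Your proposal is correct and follows essentially the same route as the paper: a detour-flow-graph / first-step analysis of the same semi-Markov description, with your joint-MGF edge labels (e.g.\ $\E\lp e^{sS}\mathbf{1}_{\{S<L\}}\rp=P_{\lambda-s}$, $\frac{\lambda_i(1-P_{\lambda-s})}{\lambda-s}$) coinciding exactly with the paper's probability-times-conditional-MGF labels $uD_3$, $bD_2$, $aD_1$, etc. The only differences are cosmetic: you merge the paper's two idle states $q_0$ and $q_{0'}$ (which the paper itself notes behave identically) and solve the resulting linear system by a direct substitution rather than writing out the general transfer function $H$.
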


\begin{figure}[!t]
\begin{tikzpicture}[>=stealth',shorten >=1pt,auto,node distance=2.8cm, semithick]
	\tikzstyle{every state}=[fill=red,draw=none,text=white]
	\node[initial,state] (A)                    {$q_0$};
	\node[state]         (B) [above right of=A] {$q_1$};
	\node[state]         (D) [below right of=A] {$q_{0'}$};
	\node[state]         (C) [below right of=B] {$q_{1'}$};
	
	\path [->] (A) edge [bend left]      node[ fill=white, anchor=center, pos=0.5] {$a$} (B);
	\path [->] (A) edge                  node[ fill=white, anchor=center, pos=0.5] {$z$} (C);
	\path [->] (B) edge [loop above]     node {$b$}(B);
	\path [->] (B) edge [bend left]      node[ fill=white, anchor=center, pos=0.5] {$u$} (A);
	\path [->] (B) edge [bend left]      node[ fill=white, anchor=center, pos=0.5] {$v$} (C);
	\path [->] (C) edge [bend left]      node[ fill=white, anchor=center, pos=0.5] {$u$} (D);
	\path [->] (C) edge [loop right]     node {$v$} (C);
	\path [->] (C) edge [bend left]      node[ fill=white, anchor=center, pos=0.5] {$b$} (B);
	\path [->] (D) edge [bend left]      node[ fill=white, anchor=center, pos=0.5] {$z$} (C);
	\draw [->] (D) ..controls +(east:5) and +(east:5)..(B) node[ fill=white, anchor=center, pos=0.5] {$a$};
\end{tikzpicture}
\caption{Semi-Markov chain representing the M/G/1/1 interdeparture time for stream $1$.}
\label{fig:fig_mg11_mc}
\end{figure}

\begin{proof}
	Without loss of generality, we will prove Lemma~\ref{lemma:lemma_mg11_Y} for stream $1$. We define $L=\min\lp
	X^{(1)},\dots,X^{(M)}\rp$ and $\Lambda=\min\lp	X^{(2)},\dots,X^{(M)}\rp$. Since $L$ and $\Lambda$ are the minimum of
	independent exponential random variables, then they are also exponentially distributed with rates
	$\lambda=\lambda_1+\dots+\lambda_M$ and $\lambda-\lambda_1$ respectively. Fig.~\ref{fig:fig_mg11_mc} shows the semi-Markov chain
	relative to the interdeparture time $Y_j$ between the $j^{th}$ and ${j+1}^{th}$ received packet of the first stream.
	When the $j^{th}$ packet leaves the queue, the system enters the idle state $q_0$ where it waits for a new packet from
	any stream to be generated. Hence two clocks start: a clock $X^{(1)}$ and a clock $\Lambda$. Clock $X^{(1)}$ ticks first with 
	probability $a=\Prob\lp X^{(1)}<\Lambda\rp$, at which point a new packet from stream $1$ will be generated first and the system
	goes to state $q_1$. The value $A$ of the clock when it ticks has distribution $\Prob\lp A>t\rp=\Prob\lp
	X^{(1)}>t|X^{(1)}<\Lambda\rp$. Clock $\Lambda$ ticks first with probability $z=1-a=\Prob\lp \Lambda<X^{(1)}\rp$, at which point
	a new packet from one of the other $M-1$ streams is generated first and the system goes to state $q_{1'}$. The value
	$Z$ of this second clock when it ticks has distribution $\Prob\lp Z>t\rp=\Prob\lp \Lambda>t|\Lambda<X^{(1)}\rp$. 
	
	When the system arrives in state $q_1$, this means a packet from stream $1$ is starting service. Thus, due to the memoryless property 
	of $\Lambda$, three clocks start: a service clock $S$, clock $X^{(1)}$ and clock $\Lambda$. The service clock ticks
	first with probability $u=\Prob\lp S<L\rp$ and its value $U$ has distribution $\Prob\lp U>t\rp=\Prob\lp
	S>t|S<L\rp$. At this point the stream $1$ packet currently being served finishes service before any new packet is
	generated and the system goes back to state $q_0$. This ends the interdeparture time $Y_j$. On the other hand, clock
	$X^{(1)}$ ticks first with probability $b=\Prob\lp X^{(1)}<\min\lp S,\Lambda\rp\rp$ and its value $B$ has distribution
	$\Prob\lp B>t\rp=\Prob\lp X^{(1)}>t|X^{(1)}<\min\lp S,\Lambda\rp\rp$. At this point, a new stream $1$ update is
	generated before any other update from other streams and preempts the one currently in service. In this case the system
	stays in state $q_1$. The third clock $\Lambda$ ticks first with probability $v=\Prob\lp \Lambda<\min\lp
	S,X^{(1)}\rp\rp$ and its value $V$ has distribution $\Prob\lp V>t\rp=\Prob\lp \Lambda>t|\Lambda<\min\lp
	S,X^{(1)}\rp\rp$. At this point a new update not from stream $1$ is generated, preempts the one currently in service
	and the system switches to state $q_{1'}$. 

	When the system arrives in state $q_{1'}$, this means a packet not from stream $1$ is starting service. Thus, due to the memoryless property 
	of $X^{(1)}$, three clocks start: a service clock $S$, clock $X^{(1)}$ and clock $\Lambda$. As for state $q_1$, the service clock ticks
	first with probability $u$ and has value $U$. At this point packet currently being served finishes service before any new packet is
	generated and the system goes to state $q_{0'}$. Also like before, clock $X^{(1)}$ ticks first with probability $b$ and
	has value $B$. At this point, a new stream $1$ update is generated before any other update from other streams and preempts the
	one currently in service. In this case the system switches to state $q_1$. The third clock $\Lambda$ ticks first with
	probability $v$ and has value $V$. At this point a new update not from stream $1$ is generated, preempts the one currently in service
	and the system stays in state $q_{1'}$.

	Finally, when the system arrives in state $q_{0'}$, this means the system is idle but no update from stream $1$ has
	been delivered. Given $X^{(1)}$ and $\Lambda$ are memoryless, the system in state $q_{0'}$ behaves exactly like if it
	were in state $q_0$.

	From the above analysis we see that the interdeparture time is given by the sum of the values of the different clocks
	on the path starting and finishing at $q_0$. For example, for the path $q_0q_1q_{1'}q_{0'}q_{1'}q_1q_0$ in
	Fig.~\ref{fig:fig_mg11_mc} the interdeparture time $Y=A_1+V_1+U_1+Z_1+B_1+U_2$, where all the random variables in the sum are
	mutually independent. This value of $Y$ is also valid for the path $q_0q_{1'}q_{0'}q_1q_{1'}q_1q_0$. Hence $Y$ depends
	on the variables $A_j,B_j,U_j,V_j,Z_j$ and their number of occurrences and not on the path itself. Therefore, the
	probability that exactly $(i_1, i_2, i_3, i_4, i_5)$ occurrences of $\lp A, B, U, V, Z\rp$ happen, which is equivalent
	to the probability that $$Y=\sum_{k=1}^{i_1} A_k + \sum_{k=1}^{i_2} B_k + \sum_{k=1}^{i_3} U_k + \sum_{k=1}^{i_4} V_k +
	\sum_{k=1}^{i_5} Z_k$$ is given by $a^{i_1}b^{i_2}u^{i_3}v^{i_4}z^{i_5}Q(i_1,i_2,i_3,i_4,i_5)$, where 
	$Q(i_1,i_2,i_3,i_4,i_5)$ is the number of paths with this combination of occurrences. Taking into account the fact that
	the $\{A_k, B_k, U_k, V_k, Z_k\}$ are mutually independent, the moment generating function of $Y$ is
	\begin{align}
		\label{eq:eq_mg11_Y_mgf1}
		\phi_Y(s) &= \E\lp\E\lp e^{sY}|\lp I_1, I_2, I_3, I_4, I_5\rp=\lp i_1, i_2, i_3, i_4, i_5\rp
		\rp\rp\nn
			  &= \sum_{i_1,i_2,i_3,i_4,i_5}
			  \left[a^{i_1}b^{i_2}u^{i_3}v^{i_4}z^{i_5}Q(i_1,i_2,i_3,i_4,i_5)\right.\nn 
				  & \left. {} \E\lp e^{s\lp\sum_{k=1}^{i_1} A_k + \sum_{k=1}^{i_2} B_k + \sum_{k=1}^{i_3} U_k + \sum_{k=1}^{i_4} V_k +
			  \sum_{k=1}^{i_5} Z_k\rp}\rp\right]\nn
			  &=\sum_{i_1,i_2,i_3,i_4,i_5}
			  \left[a^{i_1}b^{i_2}u^{i_3}v^{i_4}z^{i_5}Q(i_1,i_2,i_3,i_4,i_5)\right.\nn
				  &\left. {} \E\lp e^{sA}\rp^{i_1}\E\lp e^{sB}\rp^{i_2}\E\lp e^{sU}\rp^{i_3}\E\lp
			  e^{sV}\rp^{i_4}\E\lp e^{sZ}\rp^{i_5}\right], 
	\end{align}
	where $\{I_1, I_2, I_3, I_4, I_5\}$ are the random variables associated with the number of occurrences of $\{A, B, U,
	V, Z\}$ respectively.
	
	Moreover, given a directed graph $G=(V,E)$ with algebraic label $L(e)$ on its
	edges, and a node $u\in V$ with no incoming edges, the transfer function
	$H(v)$ from $u$ to a node $v$ is the sum over of all paths from $u$ to
	$v$ with each path contributing the product of its edge labels to the
	sum (see \cite[pp. 213--216]{rimoldi_2016}). The complete set of transfer functions $\{H(v): v\in V\}$ can be
	computed easily by solving the linear equations:
	$$\begin{cases}
		H(u) & = 1\\
		H(w) &= \sum_{w': (w',w)\in E} H(w') L(( w',w)), \quad w\neq u.
	\end{cases}$$
	Observe that the sum in \eqref{eq:eq_mg11_Y_mgf1} is nothing but the transfer function from
	$q_0$ to $\bar q_0$ in the graph shown in Fig.~\ref{fig:fig_detour_flow_graph} with
	\begin{align*}
		&(D_1,D_2,D_3,D_4,D_5)\\
		&=\lp\E\lp e^{sA}\rp,\E\lp e^{sB}\rp,\E\lp e^{sU}\rp,\E\lp e^{sV}\rp,\E\lp e^{sZ}\rp\rp.
	\end{align*}
	Solving the system of linear equations above yields the transfer function as
	\begin{align}
		\label{eq:eq_H}
		&H(D_1,D_2,D_3,D_4,D_5) \nn
		&=\sum_{\substack{i_1,i_2,i_3,\\i_4,i_5}}
			  \left[Q(i_1,i_2,i_3,i_4,i_5)a^{i_1}b^{i_2}u^{i_3}v^{i_4}z^{i_5}\right.\nn
			  &\qquad \left. {} D_1^{i_1} D_2^{i_2}D_3^{i_3}D_4^{i_4}D_5^{i_5}\right]\nn
		&=\frac{uD_3\lp bD_2zD_5+aD_1-aD_1vD_4\rp}{\lp 1-bD_2\rp\lp1-uD_3zD_5\rp-vD_4\lp
		1+uD_3aD_1\rp}.
	\end{align}
	Thus $$\phi_Y(s) = H\lp\E\lp e^{sA}\rp,\E\lp e^{sB}\rp,\E\lp e^{sU}\rp,\E\lp e^{sV}\rp,\E\lp e^{sZ}\rp\rp.$$
	From Lemma~\ref{lemma:lemma_general_results}, we know that $\E\lp e^{sB}\rp=\E\lp
	e^{sV}\rp=\frac{\lambda\lp1-P_{\lambda-s}\rp}{\lp\lambda-s\rp\lp1-P_\lambda\rp}$ and $\E\lp e^{sA}\rp=\E\lp
	e^{sZ}\rp=\frac{\lambda}{\lambda-s}$. Moreover, one can notice that $U$ has the same distribution as the system time
	$T$ so $\E\lp e^{sU}\rp=\frac{P_{\lambda-s}}{P_\lambda}$. Simple computations show that 
	$a=\frac{\lambda_1}{\lambda}$, $b=\frac{\lambda_1}{\lambda}\lp1-P_\lambda \rp$, $u=P_\lambda$,
	$v=\frac{\lambda-\lambda_1}{\lambda}\lp1-P_\lambda\rp$, $z=\frac{\lambda-\lambda_1}{\lambda}$. Finally, replacing the above
	expressions into \eqref{eq:eq_H}, we get our result.
	
	\begin{figure}[!t]
	\begin{tikzpicture}[>=stealth',shorten >=1pt,auto,node distance=2.6cm, semithick]
		\tikzstyle{every state}=[fill=red,draw=none,text=white]
		\node[state] (A)                     {$q_0$};
		\node[state] (C) [right of=A]  	     {$q_{1'}$};
		\node[state] (B) [right of=C]  {$q_1$};
		\node[state] (E) [right of=B] 	     {$\bar q_{0}$};
		\node[state] (D) at (3.9,2.5)  {$q_{0'}$};
		
		\path [->] (A) edge [bend right]     node[ fill=white, anchor=center, pos=0.5] {$aD_1$} (B);
		\path [->] (A) edge                  node[ fill=white, anchor=center, pos=0.5] {$zD_5$} (C);
		\draw [->] (B) to [out=60, in=30, looseness=8] node[above] {$bD_2$} (B);
		\path [->] (B) edge                  node[ fill=white, anchor=center, pos=0.5] {$uD_3$} (E);
		\path [->] (B) edge [bend right=30]  node[ fill=white, anchor=center, pos=0.5] {$vD_4$} (C);
		\path [->] (C) edge [bend left=30]   node[ fill=white, anchor=center, pos=0.5] {$uD_3$} (D);
		\draw [->] (C) to [out=120, in=150, looseness=8] node[above] {$vD_4$}(C);
		\path [->] (C) edge       	     node[ fill=white, anchor=center, pos=0.5] {$bD_2$} (B);
		\path [->] (D) edge [bend left=20]   node[ fill=white, anchor=center, pos=0.5] {$zD_5$} (C);
		\draw [->] (D) edge                  node[ fill=white, anchor=center, pos=0.5] {$aD_1$} (B);
	\end{tikzpicture}
	\caption{Detour flow graph of the M/G/1/1 interdeparture time for stream $1$.}
	\label{fig:fig_detour_flow_graph}
	\end{figure}
\end{proof}
\begin{thm}
	\label{thm:thm_mg11_age}
	Given an M/G/1/1 queue with preemption and service time $S$ and a source generating packets belonging to $M$ streams according to $M$
	independent Poisson processes with rates $\lambda_i$, $i=1,\dots,M$, such that $\lambda=\lambda_1+\dots+\lambda_M$,
	then  
	\begin{enumerate}
		\item the average age of stream $i$ is given by
			\begin{equation}
				\label{eq:eq_avg_age_mg11}
				\Delta_i = \frac{1}{\lambda_i P_\lambda},
			\end{equation}
		\item and the average peak age of stream $i$ is given by
			\begin{equation}
				\label{eq:eq_avg_peak_age_mg11}
				\Delta_{peak,i} = \frac{1}{\lambda_i P_\lambda}+\frac{\E\left(Se^{-\lambda S}\right)}{P_\lambda}.
			\end{equation}
	\end{enumerate}
\end{thm}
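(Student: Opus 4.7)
The plan is to compute the average age by the standard sample-path decomposition. Recall from Fig.~\ref{fig1} that the area under $\Delta(t)$ between two successive stream-$1$ receptions $t'_j$ and $t'_{j+1}$ is the trapezoid $Q_j = Y_j T_j + \tfrac{1}{2}Y_j^2$, since the age jumps down to $T_j$ at $t'_j$ and then grows linearly with slope $1$ for a duration $Y_j$. Ergodicity of the semi-Markov chain of Fig.~\ref{fig:fig_mg11_mc} then yields
\begin{equation*}
	\Delta_i = \lim_{\tau\to\infty}\frac{1}{N_\tau}\sum_{j=1}^{N_\tau}Q_j \cdot \frac{N_\tau}{\tau} = \frac{\E(Y^{(i)} T^{(i)}) + \tfrac{1}{2}\E((Y^{(i)})^2)}{\E(Y^{(i)})}.
\end{equation*}

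The key observation I would exploit is that $T^{(i)}$ and $Y^{(i)}$ are \emph{independent}: $T_j$ is the (successful) service duration of the $j$-th delivered packet, which is determined by events strictly before $t'_j$, while $Y_j$ is determined by the arrivals and services after $t'_j$, where the chain restarts in state $q_0$ by the strong Markov property. Therefore $\E(Y^{(i)}T^{(i)}) = \E(Y^{(i)})\E(T^{(i)})$ and
\begin{equation*}
	\Delta_i = \E(T^{(i)}) + \frac{\E((Y^{(i)})^2)}{2\E(Y^{(i)})}.
\end{equation*}
The three required moments come by differentiating the MGFs in Lemmas~\ref{lemma:lemma_mg11_sys_time} and \ref{lemma:lemma_mg11_Y} at $s=0$. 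From $\phi_{T^{(i)}}(s)=P_{\lambda-s}/P_\lambda$ one reads off $\E(T^{(i)})=\E(Se^{-\lambda S})/P_\lambda$. For $Y^{(i)}$, rather than differentiating the quotient directly, I would write $\phi_{Y^{(i)}}(s)(\lambda_i P_{\lambda-s}-s) = \lambda_i P_{\lambda-s}$ and match Taylor coefficients at $s=0$: the order-$0$ identity is trivial, the order-$1$ identity gives $\E(Y^{(i)})=1/(\lambda_i P_\lambda)$, and the order-$2$ identity yields $\E((Y^{(i)})^2)=2\bigl(1-\lambda_i\E(Se^{-\lambda S})\bigr)/(\lambda_i P_\lambda)^2$. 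Substitution produces the clean cancellation $\Delta_i = 1/(\lambda_i P_\lambda)$.

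For the peak age, the argument is essentially immediate from the same sample path: the height just before the $j$-th drop is $K_j = T_{j-1}+Y_{j-1}$, so by ergodicity $\Delta_{peak,i}=\E(T^{(i)})+\E(Y^{(i)})$, which gives the stated formula upon plugging in the two expectations already computed.

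The main obstacle I anticipate is the Taylor-coefficient bookkeeping for $\E((Y^{(i)})^2)$; using the implicit relation $\phi_{Y^{(i)}}\cdot(\lambda_i P_{\lambda-s}-s)=\lambda_i P_{\lambda-s}$ rather than quotient-rule differentiation keeps this clean. The independence of $T^{(i)}$ and $Y^{(i)}$ is conceptually the most delicate point, but it follows directly from the regenerative structure of the semi-Markov chain, where $q_0$ is a regeneration epoch.
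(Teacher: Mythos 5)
Your proposal is correct and follows essentially the same route as the paper: the same trapezoid decomposition giving $\Delta_i=\E(T^{(i)})+\E\bigl((Y^{(i)})^2\bigr)/\bigl(2\E(Y^{(i)})\bigr)$ with $T$ and $Y$ independent, the peak-age identity $K_j=T_{j-1}+Y_{j-1}$, and the moments extracted from the MGFs of Lemmas~\ref{lemma:lemma_mg11_sys_time} and~\ref{lemma:lemma_mg11_Y}. The only difference is your Taylor-coefficient bookkeeping via the implicit relation $\phi_{Y^{(i)}}(s)\bigl(\lambda_i P_{\lambda-s}-s\bigr)=\lambda_i P_{\lambda-s}$ instead of direct differentiation, which yields the same moments and is merely a computational convenience.
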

\begin{proof}
	Due to the symmetry in the system from a stream point of view, then, without loss of generality, we will prove 
	\ref{thm:thm_mg11_age} for stream $1$ only. The same proof applies for the other $M-1$ streams.
	
	From \eqref{eq:eq_age_definition} and Fig.~\ref{fig1}, the average age for stream $1$ of the M/G/1/1 queue can
	be also expressed as the sum of the geometric areas $Q_i$ under the instantaneous age curve. Authors in 
	\cite{2012CISS-KaulYatesGruteser} show that
	\begin{align}
		\label{eq:eq_agedef}
		\Delta_1 &=\lim_{\tau\to\infty}\frac{N_\tau}{\tau}\frac{1}{N_\tau}\sum_{j=1}^{N_\tau} Q_j
		= \lambda_e\mathbb{E}(Q),
	\end{align}
	where $\lambda_e =  \lim_{\tau\to\infty}\frac{N_\tau}{\tau}$, $Q$ is the steady-state counterpart of $Q_j$ and the
	second equality is justified by the ergodicity of the system. As shown in \cite{YatesKaul-2016arxiv} and 
	\cite{NajmYatesSoljaninMG11}, $\lambda_e$ is the rate at which successful updates are received. Given that the 
	interarrival time of all streams are memoryless, then the interdeparture times,	$Y_j$ and $Y_{j+1}$, between two 
	consecutive received updates are i.i.d. Hence $N_{\tau}$ forms a renewal process and by \cite{ross},
	$\lim_{\tau\to\infty}\frac{N_\tau}{\tau}=\frac{1}{\E(Y)}$, where $Y$ is the steady-state interdeparture random
	variable. Moreover, from Fig.~\ref{fig1} we see that by applying same reasoning as in \cite{NajmNasser-ISIT2016} 
	$$ \E\left(Q\right) = \frac{1}{2} \E\lp Y^2\rp +\E\lp TY\rp = \frac{1}{2} \E\lp Y^2\rp +\E\lp T\rp \E\lp
	Y\rp.$$ The second equality is obtained by noticing that for any received packet $j$, $T_j$ and $Y_j$ are independent.
	Therefore,
	\begin{equation}
		\label{eq:eq_steady_state_mg11_age}
		\Delta_1 = \E\lp T\rp + \frac{\E\lp Y^2\rp}{2\E\lp Y\rp}
	\end{equation}
	Moreover, from Fig.~\ref{fig1} we see that the peak age at the instant before receiving the $j^{th}$ packet is given
	by $$K_j = T_{j-1}+Y_{j-1}.$$ Hence, given that the system is ergodic, \eqref{eq:eq_avg_peak_age_def} becomes at steady state, 
	\begin{equation}
		\label{eq:eq_peak_age_1}
		\Delta_{peak,1}=\E\lp K\rp=\E\lp T\rp+\E\lp Y\rp.
	\end{equation}
	
	Using Lemma~\ref{lemma:lemma_mg11_sys_time}, we get that $\E\lp T\rp= P_\lambda^{-1}\E\lp Se^{-\lambda S}\rp$. Using
	Lemma~\ref{lemma:lemma_mg11_Y}, we get that $\E\lp Y\rp= \lp \lambda_1 P_\lambda\rp^{-1}$ and $\E\lp
	Y^2\rp=2\lp-\frac{\E\lp Se^{-\lambda S}\rp}{\lambda_1 P_\lambda^2}+\frac{1}{\lambda_1^2 P_\lambda^2}\rp$. Using these 
	expressions in \eqref{eq:eq_steady_state_mg11_age} and \eqref{eq:eq_peak_age_1} we obtain our result for stream $1$. 
\end{proof}

Note that for $M=1$, we get back the result derived in \cite{NajmYatesSoljaninMG11} for single stream M/G/1/1 preemptive queue.
Moreover, if we replace $P_\lambda$ in \eqref{eq:eq_avg_age_mg11} by the Laplace transform of the exponential
distribution evaluated at $\lambda$ we recover the expression stated in \cite[Theorem 2(a)]{YatesKaul-2016arxiv}.

\begin{corollary}
	\label{cor:cor}
	Let a source generate updates according to a Poisson process with fixed rate $\lambda$. Moreover, these updates belong
	to $M$ different streams, each stream $i$ chosen independently with probability $p_i$ at generation time. Then if we
	use an M/G/1/1 with preemption transmission scheme we can decrease the average age (and the average peak age) of a high
	priority stream $k$ with respect to the other streams by increasing the probability $p_k$ with which it is chosen.
\end{corollary}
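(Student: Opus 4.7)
The plan is to read Corollary~1 off directly from the closed-form expressions in Theorem~\ref{thm:thm_mg11_age}. Since the total rate $\lambda$ is held fixed, the Laplace transform $P_\lambda = \E(e^{-\lambda S})$ and the quantity $\E(Se^{-\lambda S})$ are constants of the problem, depending only on the service time distribution and on $\lambda$, not on the choice probabilities $(p_1,\dots,p_M)$.

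First, I would substitute $\lambda_k = \lambda p_k$ into \eqref{eq:eq_avg_age_mg11} to write
\begin{equation*}
  \Delta_k = \frac{1}{\lambda p_k P_\lambda},
\end{equation*}
and observe that the right-hand side is strictly decreasing in $p_k$. Hence increasing $p_k$ (while keeping $\lambda$ fixed) strictly decreases $\Delta_k$. Moreover, since raising $p_k$ forces at least one other $p_j$, $j \neq k$, to decrease (as $\sum_i p_i = 1$), the corresponding $\Delta_j = 1/(\lambda p_j P_\lambda)$ strictly increases. Thus stream $k$ becomes fresher in absolute terms while the other streams become staler, so stream $k$ is prioritized from the age viewpoint.

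Next, I would repeat the same argument for the peak age, using \eqref{eq:eq_avg_peak_age_mg11}:
\begin{equation*}
  \Delta_{peak,k} = \frac{1}{\lambda p_k P_\lambda} + \frac{\E(Se^{-\lambda S})}{P_\lambda}.
\end{equation*}
The second term does not depend on $p_k$ (only on $\lambda$ and $F_S$), so the monotonicity in $p_k$ is inherited from the first term and the same conclusion follows. There is essentially no obstacle here: the work has already been done in Theorem~\ref{thm:thm_mg11_age}, and the corollary is a direct monotonicity observation once one notices that fixing $\lambda$ decouples $P_\lambda$ and $\E(Se^{-\lambda S})$ from the splitting probabilities.
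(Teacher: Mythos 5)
Your proposal is correct and follows essentially the same route as the paper: both read the result directly off the closed forms $\Delta_i = 1/(\lambda_i P_\lambda)$ and $\Delta_{peak,i} = 1/(\lambda_i P_\lambda) + \E(Se^{-\lambda S})/P_\lambda$ from Theorem~\ref{thm:thm_mg11_age}, using that $P_\lambda$ and $\E(Se^{-\lambda S})$ are fixed once $\lambda$ is fixed, so the age of stream $k$ is monotone decreasing in $\lambda_k = \lambda p_k$. Your added remark that the other streams' ages necessarily increase is a harmless elaboration of the same observation.
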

\begin{proof}
	From Theorem~\ref{thm:thm_mg11_age} we know that for any two streams $i$ and $k$, in order to have $\Delta_i<\Delta_k$
	or $\Delta_{peak,i}<\Delta_{peak,k}$ we must have $\lambda_i>\lambda_k$. Given that $\lambda_i = \lambda p_i$, 
	$i=1,\dots,M$, then  we must have $p_i>p_k$.
\end{proof}
Given the source generates multiple streams, one interesting performance measure of the system would be the total average age
or total average peak age defined respectively as 
\begin{align}
	\label{eq:eq_total_age}
	\Delta_{tot} &= \sum_{i=1}^M \Delta_i,\ \Delta_{peak,tot} = \sum_{i=1}^M \Delta_{peak,i}.
\end{align}
The next theorem gives the distribution over the $p_i$, $i=1,\dots,M$, that minimizes the metrics in \eqref{eq:eq_total_age} as
well as their minimum achievable value.
\begin{thm}
	\label{thm:thm_optimal_age}
	For the M/G/1/1 multi-stream preemptive system described above with fixed total generation rate $\lambda$, the optimal
	strategy that achieves the smallest value for the total average age, $\Delta_{tot}$, and the total average peak age,
	$\Delta_{peak,tot}$, is the fair strategy: all streams should have the same generation rate. This means that the
	probability distribution $\{p_i\}$ over the choices of streams should be the uniform distribution with
	$p_i=\frac{1}{M}$, $i=1,\dots,M$. Moreover, the optimal values of $\Delta_{tot}$ and $\Delta_{peak,tot}$ are given by
	\begin{align}
		\label{eq:eq_optimal_total_age}
		\Delta_{tot} &= \frac{M^2}{\lambda P_\lambda},\ \Delta_{peak,tot} = \frac{M^2}{\lambda P_\lambda}+\frac{M\E\left(Se^{-\lambda S}\right)}{P_\lambda}
	\end{align}
\end{thm}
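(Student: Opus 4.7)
The plan is to use Theorem~\ref{thm:thm_mg11_age} to reduce the minimization to a clean convex optimization problem in the rates $\lambda_i$, under the single linear constraint $\sum_{i=1}^M \lambda_i = \lambda$. Since $\lambda$ is fixed, the quantities $P_\lambda$ and $\E(Se^{-\lambda S})$ are constants with respect to the design choice $\{p_i\}$; only the factors $1/\lambda_i$ depend on the distribution. Substituting the expressions from \eqref{eq:eq_avg_age_mg11} and \eqref{eq:eq_avg_peak_age_mg11} into \eqref{eq:eq_total_age} gives
\begin{equation*}
	\Delta_{tot} = \frac{1}{P_\lambda}\sum_{i=1}^M\frac{1}{\lambda_i}, \quad \Delta_{peak,tot} = \frac{1}{P_\lambda}\sum_{i=1}^M\frac{1}{\lambda_i} + \frac{M\,\E(Se^{-\lambda S})}{P_\lambda}.
\end{equation*}
Both objectives depend on $\{\lambda_i\}$ only through $\sum_{i=1}^M 1/\lambda_i$, so they are minimized by the same $\{\lambda_i\}$.

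Next I would minimize $\sum_{i=1}^M 1/\lambda_i$ subject to $\sum_{i=1}^M \lambda_i = \lambda$ and $\lambda_i > 0$. The function $x \mapsto 1/x$ is strictly convex on $(0,\infty)$, so by Jensen's inequality (equivalently, the AM--HM inequality or the Cauchy--Schwarz inequality applied to the vectors $(\sqrt{\lambda_i})$ and $(1/\sqrt{\lambda_i})$),
\begin{equation*}
	\sum_{i=1}^M\frac{1}{\lambda_i} \;\geq\; \frac{M^2}{\sum_{i=1}^M \lambda_i} \;=\; \frac{M^2}{\lambda},
\end{equation*}
with equality if and only if all $\lambda_i$ are equal, i.e.\ $\lambda_i = \lambda/M$. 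Since $\lambda_i = \lambda p_i$, this is equivalent to $p_i = 1/M$ for every $i$.

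Finally, substituting $\sum_{i=1}^M 1/\lambda_i = M^2/\lambda$ back into the expressions for $\Delta_{tot}$ and $\Delta_{peak,tot}$ yields the closed-form optima in \eqref{eq:eq_optimal_total_age}. I expect no serious obstacle: the only subtlety worth noting is that one must first observe that $P_\lambda$ and $\E(Se^{-\lambda S})$ depend on the total rate $\lambda$ alone and not on the individual $\lambda_i$'s, which is what allows the optimization to decouple from the service-time distribution and reduce to a one-line application of a classical convexity inequality.
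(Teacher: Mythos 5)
Your proposal is correct and follows essentially the same route as the paper: substitute the per-stream expressions from Theorem~\ref{thm:thm_mg11_age} into \eqref{eq:eq_total_age}, observe that $P_\lambda$ and $\E\lp Se^{-\lambda S}\rp$ depend only on the fixed total rate $\lambda$, and minimize $\sum_i 1/\lambda_i$ under $\sum_i \lambda_i=\lambda$ by convexity. Your explicit AM--HM/Jensen step is just a slightly more detailed justification of the paper's appeal to symmetric convexity, so there is nothing to add.
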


\begin{proof}
	From \eqref{eq:eq_avg_age_mg11}, \eqref{eq:eq_avg_peak_age_mg11} and \eqref{eq:eq_total_age}, we get that 
	\begin{align}
		\label{eq:eq_age_tot_2}
		\Delta_{tot} &= \frac{1}{P_\lambda}\sum_{i=1}^M \frac{1}{\lambda_i} = \frac{1}{\lambda P_{\lambda}}\sum_{i=1}^M
		\frac{1}{p_i}\nn
		\Delta_{peak,tot} &= \frac{1}{P_\lambda}\sum_{i=1}^M \frac{1}{\lambda_i}+\frac{M\E\left(Se^{-\lambda
		S}\right)}{P_\lambda}\nn
				  &= \frac{1}{\lambda P_\lambda}\sum_{i=1}^M \frac{1}{p_i}+\frac{M\E\left(Se^{-\lambda
		S}\right)}{P_\lambda}
	\end{align}
	Given that $\lambda$ is fixed, then minimizing $\Delta_{tot}$ and $\Delta_{peak,tot}$ over $(p_1,\dots,p_M)$ is
	equivalent to minimizing $\sum_{i=1}^M \frac{1}{p_i}$. As this is a symmetric convex function, it is minimized when
	$p_1=\dots=p_M=1/M$ with the value $M^2$, which proves our theorem.
	\end{proof}
	From Corollary~\ref{cor:cor} and Theorem~\ref{thm:thm_optimal_age}, we see that prioritizing a stream over the others
	from an age point of view and minimizing the total age are two contradictory objectives. 


\section{Conclusion}

In this paper we studied the M/G/1/1 preemptive system with a multi-stream updates source. We derived closed form expressions
for the average age and average peak age using the detour flow graph method. Moreover, using these results we showed that, for
a fixed total generation rate, one can't prioritize one of the streams and at the same time minimize the total age. In fact, we
prove that in order to optimize the total age, the source needs to generate all streams at the same rate. This means that no
single stream can be given a higher rate, a necessary condition to reduce its age with respect to the other streams.


\section*{Acknowledgements}

This research was supported in part by grant No.  200021\_166106/1 of the Swiss National Science Foundation.



\bibliographystyle{IEEEtran}
\bibliography{IEEEabrv,ehcache}
%



\end{document}